\documentclass[letterpaper, DIV=12]{scrartcl}

\usepackage{amssymb, amsthm, mathtools,bbm}
\usepackage[square,sort,comma,numbers]{natbib}
\usepackage[none]{hyphenat}
\setlength{\bibsep}{0.5pt}
\newtheorem{theorem}{Theorem}

\newtheorem{lemma}[theorem]{Lemma}
\newtheorem{definition}[theorem]{Definition}
\newtheorem{proposition}[theorem]{Proposition}

\numberwithin{equation}{section}
\numberwithin{theorem}{section}


\newcommand{\cc}{{\mathbb{C}}}

\newcommand{\nn}{{\mathbb{N}}}

\newcommand{\ee}{{\mathbb{E}\,}}

\newcommand{\pp}{{\mathbb{P}}}

\newcommand{\tr}{{\operatorname{Tr}\,}}

\newcommand{\beq}[1]{\begin{equation} \label{#1}}
\newcommand{\eeq}{\end{equation}}

\newcommand{\arcosh}{\operatorname{arcosh}}

\begin{document}
	\addtokomafont{author}{\raggedright}

	\title{ \raggedright The Quantum Random Energy Model\\ as a Limit of p-Spin Interactions}
	
     \author{\hspace{-.075in} Chokri Manai and Simone Warzel}
     \date{\vspace{-.3in}}
	
	\maketitle

	\minisec{Abstract}
		We consider the free energy of a mean-field quantum spin glass described by a $ p $-spin interaction and a transversal magnetic field. 
		Recent rigorous results for the case $ p= \infty $, i.e.\ the quantum random energy model (QREM), are reviewed.
		We show that the free energy of the $ p $-spin model converges in a joint thermodynamic and $ p \to \infty $ limit to the free energy
		of the QREM.

	\section{Introduction}
	A prominent class of classical mean-field spin glass models are $ p $-spin interactions defined on $ N $  Ising-type spins 
	\begin{equation}\label{eq:spins}
	\pmb{\sigma} = (\sigma_1, \dots , \sigma_N) \in \{ -1,1 \}^N =:  \mathcal{Q}_N . 
	\end{equation}
	For fixed $ p \in [1,\infty) $ the interaction energy of these spins is random and given by 
	\begin{equation}\label{eq:U}
	U_p(\pmb{\pmb{\sigma}}) = \frac{1}{N^{\frac{p-1}{2}}} \sum_{j_1, \dots, j_p =1}^N g_{j_1,\dots,j_p}{\sigma}_{j_1} \cdots {\sigma}_{j_p}
	\end{equation}
	in terms of an array  $ g_{\pmb{j}} := g_{j_1,\dots,j_p} $  of independent and identically distributed (i.i.d.), centered Gaussian random variable with variance one. 
	The process $ U_p(\pmb{\sigma}) $, $\pmb{\sigma} \in \mathcal{Q}_N $ is then Gaussian as well and uniquely characterized by its mean and covariance function,
	\begin{equation}\label{eq:spinp}
	\mathbb{E}\left[U_p(\pmb{\sigma}) \right] = 0 , \qquad  \mathbb{E}\left[U_p(\pmb{\sigma}) U_p(\pmb{\sigma}') \right] = N \left( N^{-1} \sum_{j=1}^N \sigma_j \sigma_j' \right)^p =: N \,  \xi(\pmb{\sigma},\pmb{\sigma}')^p  . 
	\end{equation}
	The special case $ p = 2 $ corresponds to the Sherrington-Kirkpatrick model, and in the limit $ p \to \infty $ we obtain 
	Derrida's random energy model (REM) \cite{Der81}. 
	In the latter case, the correlations vanish and the variables $ U_\infty(\pmb{\sigma}) $ form an i.i.d.\ Gaussian process on the hypercube $  \mathcal{Q}_N $.

	There is a wealth of results both in the physics as well as mathematics literature concerning properties of the Gibbs measure of these classical mean-field spin glasses. Most celebrated is a closed form expression for the free energy derived by Parisi \cite{Par80} and later proven by Talagrand and Panchenko \cite{Tal06,Pan14}. This formula reflects the fact that at low temperatures the Gibbs measure fractures into many inequivalent pure states. A key quantity in this area is the distribution of the overlap $ \xi(\pmb{\sigma},\pmb{\sigma}') $ of independent copies or replicas of spins  $ \pmb{\sigma},\pmb{\sigma}'  $. We refer the mathematically interested reader to the monographs \cite{Bov06,Tal11,Pan13} and references therein.

	Despite its popularity in physics (cf.~\cite{SIC12,BFKSZ13} and refs. therein),  much less is rigorously  established if one incorporates quantum effects in the form of a transversal magnetic field.  In the quantum case, one views the spins configurations~\eqref{eq:spins} as the $ z $-components of $ N $ spin-$1/2 $ quantum spins and the energy~\eqref{eq:U} is lifted to the corresponding Hilbert space $  \otimes_{j=1}^N \cc^2 \equiv  \ell^2( \mathcal{Q}_N) $ as a diagonal matrix $ U_p $.  The random Hamiltonian of the quantum p-spin model with transversal magnetic field of strength $\Gamma \geq 0$  is
	\begin{equation}\label{eq:hamiltonp}
	H_p = U_p + \Gamma\,  T , 
	\end{equation} 
	where $ \left( T\psi\right)(\pmb{\sigma}) := -  \sum_{j=1}^N \psi(  \sigma_1, \dots , - \sigma_j  , \dots , \sigma_N  ) $ coincides  with the action of the negative sum of $ x $-components of the Pauli matrices in the $ z $-basis. 
	In this paper, we are concerned with the corresponding quantum free energy or pressure at inverse temperature $\beta \in [0,\infty)$ 
	\begin{equation}\label{eq:F}
	\Phi_N^{p}(\beta,\Gamma) := \frac{1}{N} \ln Z_N^{p}(\beta,\Gamma)
	\end{equation}
	which derives from the partition function $  Z_N^{p}(\beta,\Gamma) = 2^{-N} \tr e^{-\beta H_p} $. The case \mbox{$ p = \infty $} corresponds to the 
	pressure of the quantum random energy model (QREM), and we will write \linebreak $  \Phi^\text{QREM}(\beta,\Gamma) :=  \Phi^{\infty}(\beta,\Gamma) $.
	
	\section{Rigorous results on the free energy} 
	
	A basic property of the free energy of spin glasses is its self-averaging, i.e. the fact that  in the thermodynamic limit $ N \to \infty $ this quantity agrees almost surely with its average. For p-spin interactions even more general than~\eqref{eq:U}  self-averaging of the quantum free energy has been established in \cite{Craw07}.
	Since we restrict ourselves to the Gaussian case, this property follows immediately from the standard Gaussian concentration inequality.
	We therefore include the short argument for pedagogical reasons. 
	\begin{proposition}[\cite{Craw07}]
		There are some constants $ c, C \in (0,\infty) $ such that for any $ p \in [1,\infty] $ the Gaussian concentration estimate
		\begin{equation}  \mathbb{P}\left(\left| \Phi_N^{p}(\beta,\Gamma)  - \mathbb{E}\left[\Phi_N^{p}(\beta,\Gamma) \right]  \right|  > \frac{t \, \beta }{\sqrt{N} } \right) \leq  C \,  \exp\left(- c t^2\right)
		\end{equation}
		holds for all $ t > 0 $ and all $ N \in \mathbb{N} $ .
	\end{proposition}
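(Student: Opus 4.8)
\medskip\noindent\emph{Proof plan.}
The plan is to exhibit $\Phi_N^{p}(\beta,\Gamma)$ as a Lipschitz function of the underlying family of i.i.d.\ standard Gaussian variables and then invoke the standard Gaussian concentration inequality: if $f\colon \rr^M \to \rr$ is $L$-Lipschitz with respect to the Euclidean norm, then $\pp\big(|f - \ee f| > u\big) \le 2\exp\!\big(-u^2/(2L^2)\big)$ for all $u>0$, the essential point being that this bound is independent of the dimension $M$. Applying this with $L = \beta/\sqrt N$ and $u = t\beta/\sqrt N$ would immediately give the asserted estimate with $C = 2$ and $c = 1/2$, the factors of $\beta$ cancelling exactly as in the statement. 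So the whole task reduces to the Lipschitz bound.

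For finite $p$ the probability space is $\rr^{N^p}$ carrying the standard Gaussian law of the array $(g_{\pmb j})_{\pmb j}$. First I would differentiate $\Phi_N^p = \frac1N \ln\big(2^{-N}\tr e^{-\beta H_p}\big)$ in $g_{\pmb j}$; since $g_{\pmb j}$ enters $H_p$ only through the diagonal matrix $U_p$ and only linearly, cyclicity of the trace renders the noncommutativity of $U_p$ and $T$ harmless at first order, giving
\[
\parder{\Phi_N^p}{g_{\pmb j}} \;=\; -\frac{\beta}{N}\,\frac{2^{-N}\tr\!\big((\partial_{g_{\pmb j}}U_p)\, e^{-\beta H_p}\big)}{Z_N^{p}(\beta,\Gamma)} \;=\; -\frac{\beta}{N^{(p+1)/2}}\,\big\langle \diag(\sigma_{j_1}\cdots\sigma_{j_p})\big\rangle,
\]
where $\langle\cdot\rangle$ is the Gibbs state of $H_p$. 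As $\diag(\sigma_{j_1}\cdots\sigma_{j_p})$ has entries in $\{-1,1\}$, its Gibbs expectation lies in $[-1,1]$ \emph{regardless of the disorder}, so each partial derivative is bounded in modulus by $\beta\,N^{-(p+1)/2}$. Summing the squares over the $N^p$ multi-indices $\pmb j$ yields $\|\nabla\Phi_N^p\|^2 \le N^p\cdot\beta^2 N^{-(p+1)} = \beta^2/N$, uniformly in the disorder; hence $\Phi_N^p$ is $(\beta/\sqrt N)$-Lipschitz.

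For $p=\infty$ the same scheme applies after writing $U_\infty(\pmb\sigma) = \sqrt N\, h_{\pmb\sigma}$ with $(h_{\pmb\sigma})_{\pmb\sigma\in\mathcal Q_N}$ i.i.d.\ standard Gaussians (the covariance in \eqref{eq:spinp} degenerates to $N\,\delta_{\pmb\sigma,\pmb\sigma'}$ as $p\to\infty$): then $\partial_{h_{\pmb\sigma}}H_\infty = \sqrt N\,\ketbra{\pmb\sigma}{\pmb\sigma}$, so $\partial_{h_{\pmb\sigma}}\Phi_N^\infty = -(\beta/\sqrt N)\langle \ketbra{\pmb\sigma}{\pmb\sigma}\rangle$ and, since $\sum_{\pmb\sigma}\langle\ketbra{\pmb\sigma}{\pmb\sigma}\rangle = 1$, one gets $\|\nabla\Phi_N^\infty\|^2 \le (\beta^2/N)\big(\sum_{\pmb\sigma}\langle\ketbra{\pmb\sigma}{\pmb\sigma}\rangle\big)^2 = \beta^2/N$ again. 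There is no genuine obstacle here — this is why the statement admits a short, pedagogical proof — but one should be mildly careful about two routine points: justifying the differentiation under the trace together with the cyclicity argument that bypasses Duhamel's formula, and verifying that the Lipschitz constant $\beta/\sqrt N$ is genuinely uniform in both $N$ and $p\in[1,\infty]$, so that a single pair $(c,C)$ works across all cases.
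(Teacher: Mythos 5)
Your proposal is correct and follows essentially the same route as the paper's own proof: compute $\partial\Phi_N^p/\partial g_{\pmb j}$ as $-\beta N^{-(p+1)/2}$ times the Gibbs expectation of $\diag(\sigma_{j_1}\cdots\sigma_{j_p})$, bound each such expectation by $1$ uniformly in the disorder, sum the squares over $N^p$ indices to get $\|\nabla\Phi_N^p\|^2\le\beta^2/N$, and invoke Gaussian concentration for Lipschitz functions. The only (harmless) difference is that you also spell out the $p=\infty$ case via $U_\infty(\pmb\sigma)=\sqrt N\,h_{\pmb\sigma}$, which the paper leaves implicit.
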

	\begin{proof}
		The pressure's variations with respect to the i.i.d.\ standard Gaussian variables $ g_{\pmb{j}} $ is
		$$\displaystyle - \frac{\partial \Phi_N^{p}(\beta,\Gamma)}{\partial g_{\pmb{j}}}   = \frac{\beta}{N^{\frac{p+1}{2}} \, 2^{N} Z(\beta, \Gamma) } \sum_{\pmb{\sigma}}    \sigma_{j_1} \cdots  {\sigma}_{j_p} \, \langle \pmb{\sigma} | e^{-\beta H} | \pmb{\sigma} \rangle . $$
		Here and in the following we use Dirac's bracket notation for matrix elements. Consequently, the Lipschitz constant is bounded by 
		$$\displaystyle \sum_{\pmb{j}}  \left( \frac{\partial \Phi_N^{p}(\beta,\Gamma)}{\partial g_{\pmb{j}}}\right)^2  \leq  \frac{\beta^2}{N}.  $$
		The claim thus follows from the Gaussian concentration inequality for Lipschitz functions. 
	\end{proof}

	In the classical case $ \Gamma =0 $, the free energy of any $ p $-spin interaction is given in terms of Parisi's formula \cite{Pan14}.
	One of its main features is a transition at small enough temperatures to a spin glass regime. 
	At $ p = \infty $ this formula takes a simple form:
	\begin{equation}
	\Phi^{\text{REM}}(\beta) = \lim_{N\to \infty}  \Phi_N^{\infty}(\beta,0) =  \left\{ \begin{array}{l@{\quad}r} \tfrac{1}{2} \beta^2 & \mbox{if } \; \beta \leq \beta_c , \\[1ex]   \tfrac{1}{2} \beta_c^2 + (\beta - \beta_c) \beta_c & \mbox{if } \;   \beta > \beta_c .\end{array} \right.
	\end{equation}
	The non-differentiability at $$ \beta_c := \sqrt{2 \ln 2 }  $$
	reflects a first-order freezing transition into a low-temperature phase characterized by the vanishing of the specific entropy.
	
	Under the addition of a constant transversal field, this freezing transition vanishes for $ \Gamma $ large enough. A first-order phase transition into a quantum paramagnetic phase, characterized by 
	$$ \Phi^{\mathrm{PAR}}(\beta \Gamma)  := \ln \cosh\left(\beta \Gamma\right),$$
	occurs at $ \Gamma_c(\beta)  := \beta^{-1} \arcosh\left( \exp\left(  \Phi^{\text{REM}}(\beta)\right)  \right) $. 
	At $ \beta = \infty $ this connects to the known location $ \Gamma_c(\infty) = \beta_c $ of the quantum phase transition of the ground state~\mbox{\cite{JKrKuMa08,AW15}.} 
	
	\begin{figure}[ht]
		\begin{center}
			\includegraphics[width=\textwidth] {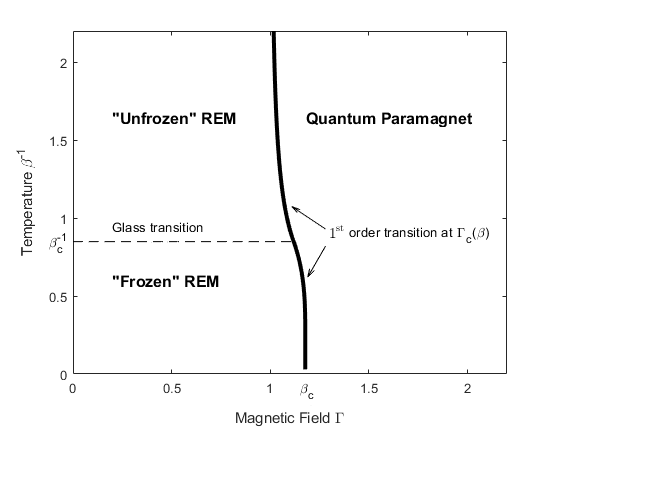}
		\end{center}
		\vspace*{-1cm}
		\caption{Phase diagram of the QREM as a function of the transversal magnetic field $ \Gamma  $ and the temperature $ \beta^{-1}$. The first-order transition occurs at fixed $ \beta $ and $ \Gamma_c(\beta) $. The freezing transition is found at temperature $ \beta_c^{-1} = (2\ln 2)^{-1/2} $, which is unchanged in the presence of a magnetic field of strength $\Gamma < \Gamma_c(\beta)$. } \label{fig:phase}
	\end{figure}
	
	The shape of the phase diagram of the QREM in Figure~\ref{fig:phase} including the precise location of the first-order transition, was predicted by
	Goldschmidt \cite{Gold90} in the 1990s. His arguments are based on the replica trick and the so-called static approximation in the path-integral representation of $\ee[Z_N^p(\beta,\Gamma)^n]$. In a recent paper~\cite{MW19}, we confirmed this prediction. 
	
	\begin{theorem}[\cite{MW19}]\label{thm:Gold}
		For any $ \Gamma , \beta \geq 0 $ almost surely:
		$$
		\Phi^\text{QREM}(\beta,\Gamma) := \lim_{N\to \infty}  \Phi_N^{\textrm{QREM}}(\beta,\Gamma) =\max\{  \Phi^{\text{REM}}(\beta), \Phi^{\text{PAR}}(\beta \Gamma) \}
		. 
		$$
	\end{theorem}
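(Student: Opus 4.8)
\emph{Reduction.} By the Gaussian concentration bound above together with Borel--Cantelli, $\Phi_N^{\mathrm{QREM}}(\beta,\Gamma)-\ee\Phi_N^{\mathrm{QREM}}(\beta,\Gamma)\to 0$ almost surely, so it suffices to prove matching almost-sure upper and lower asymptotic bounds for $\Phi_N^{\mathrm{QREM}}$. Throughout, $H=U+\Gamma T$ with $U$ diagonal, $U(\pmb\sigma)=\sqrt N\,X_{\pmb\sigma}$ for i.i.d.\ standard Gaussians $X_{\pmb\sigma}$, and $Z:=\tr e^{-\beta H}=2^N Z_N^{\mathrm{QREM}}$.

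\emph{Lower bound.} Both inequalities $\Phi^{\mathrm{QREM}}\ge\Phi^{\mathrm{REM}}$ and $\Phi^{\mathrm{QREM}}\ge\Phi^{\mathrm{PAR}}$ follow from trial-state principles. Restricting the trace to the diagonal and using the operator Jensen inequality $\braket{\pmb\sigma}{e^{-\beta H}\pmb\sigma}\ge e^{-\beta\braket{\pmb\sigma}{H\pmb\sigma}}=e^{-\beta U(\pmb\sigma)}$ (the diagonal of $T$ vanishes) gives $Z\ge\sum_{\pmb\sigma}e^{-\beta U(\pmb\sigma)}$, i.e.\ $\Phi_N^{\mathrm{QREM}}\ge\Phi_N^{\mathrm{REM}}\to\Phi^{\mathrm{REM}}(\beta)$. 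For the paramagnetic bound, apply the Gibbs variational principle $\ln\tr e^{-\beta H}\ge S(\rho)-\beta\tr(\rho H)$ ($S$ the von Neumann entropy) to the product state $\rho=e^{-\beta\Gamma T}/\tr e^{-\beta\Gamma T}$, whose diagonal in the configuration basis is uniform: this yields $\Phi_N^{\mathrm{QREM}}\ge\ln\cosh(\beta\Gamma)-\beta\,\big|2^{-N}\sum_{\pmb\sigma}U(\pmb\sigma)\big|$, and the error term has variance $N2^{-N}$, hence vanishes almost surely. Taking the maximum of the two bounds closes the lower bound.

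\emph{Upper bound --- reduction to a density-of-states estimate.} Fix $\epsilon>0$. Writing $Z=\int\beta e^{-\beta uN}\,\mathcal N(uN)\,du$ with $\mathcal N(E):=\#\{k:\lambda_k(H)\le E\}$ and applying Laplace's method, the assertion follows (up to $\epsilon$) from the estimate
\[
\mathcal N(uN)\ \le\ \exp\!\Big(N\,\max\{s_{\mathrm{REM}}(u),\,s_{\mathrm{PAR}}(u)\}+o(N)\Big),\qquad u\in[-\beta_c,0],
\]
together with the trivial $\mathcal N(uN)\le 2^N$. Here $s_{\mathrm{REM}}(u)=\ln 2-u^2/2$ on $[-\beta_c,\beta_c]$ is the REM entropy density, and $s_{\mathrm{PAR}}(u)$ is the entropy density of the spectrum of $\Gamma T$ alone, i.e.\ the binary entropy of $(1+u/\Gamma)/2$ on $[-\Gamma,\Gamma]$. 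Since $\sup_u\{s_{\mathrm{REM}}(u)-\beta u\}=\ln 2+\Phi^{\mathrm{REM}}(\beta)$, $\sup_u\{s_{\mathrm{PAR}}(u)-\beta u\}=\ln(2\cosh\beta\Gamma)=\ln 2+\Phi^{\mathrm{PAR}}(\beta\Gamma)$, and the Legendre transform of a pointwise maximum is the maximum of the Legendre transforms, the displayed bound gives $\Phi_N^{\mathrm{QREM}}\le\max\{\Phi^{\mathrm{REM}}(\beta),\Phi^{\mathrm{PAR}}(\beta\Gamma)\}+o(1)$.

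\emph{Upper bound --- proof of the density-of-states estimate, and the main obstacle.} The two entropy profiles correspond to two mechanisms for producing low eigenvalues, and the crux is that they do not reinforce one another. Set $\mathcal B:=\{\pmb\sigma:U(\pmb\sigma)<-\gamma N\}$ for a threshold $\gamma\in(\beta_c/\sqrt 2,\beta_c)$. A first-moment computation shows that almost surely, for large $N$, any two distinct points of $\mathcal B$ lie at Hamming distance $\ge\alpha N$ for some $\alpha=\alpha(\gamma)>0$: the expected number of closer pairs is $\le\exp(N(\ln 2+h(\alpha)-\gamma^2)+o(N))$, $h$ the binary entropy, which tends to $0$ once $h(\alpha)<\gamma^2-\ln 2$ --- possible precisely because $\gamma^2>\ln 2$. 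Hence the Hamming balls $B(\pmb\sigma_0,r)$ of radius $r=\lfloor N^{1/2}\rfloor$, $\pmb\sigma_0\in\mathcal B$, are pairwise disjoint and non-adjacent; on each of them $\Gamma T$ has norm $O(\sqrt{rN})=o(N)$ and $\max\{|U(\pmb\sigma)|:\pmb\sigma\in B(\pmb\sigma_0,r)\setminus\{\pmb\sigma_0\}\}=o(N)$ (again by a first moment). Decomposing $\ell^2(\mathcal Q_N)$ into the balls and a bulk complement $\mathcal H_{\mathrm{bulk}}$ and estimating the off-block part of $H$ by a Feshbach--Schur (operator Cauchy--Schwarz) inequality --- costing only an additive $o(N)$, precisely because the balls are non-adjacent --- one reduces to block-diagonal operators, so $\mathcal N(uN)\le\#\{\pmb\sigma_0\in\mathcal B:U(\pmb\sigma_0)\le uN+o(N)\}+e^{o(N)}\#\mathcal B+\mathcal N_{H_{\mathrm{bulk}}}(uN)$; the first two terms are $\le e^{Ns_{\mathrm{REM}}(u)+o(N)}$. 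The remaining term $\mathcal N_{H_{\mathrm{bulk}}}(uN)$, with $-\gamma N\le U\le0$ on $\mathcal H_{\mathrm{bulk}}$, is the obstacle, and I expect it to be the principal difficulty: a multiplicative estimate (Golden--Thompson, or Hölder in Schatten norms) only gives $\tr_{\mathrm{bulk}}e^{-\beta H}\le\cosh(\beta\Gamma)^N\sum_{\pmb\sigma\in\mathrm{bulk}}e^{-\beta U(\pmb\sigma)}$ --- a bound of the forbidden form $\Phi^{\mathrm{REM}}+\Phi^{\mathrm{PAR}}$. What is required instead is a non-multiplicative, Courant--Fischer argument: an eigenvector belonging to a low bulk eigenvalue cannot be simultaneously localised on the (exponentially sparse) wells of $U$ and delocalised enough to profit from the transversal field, so that its count is governed by $s_{\mathrm{PAR}}$ alone. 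Implementing this dichotomy quantitatively --- e.g.\ by iterating the well-depth decomposition across all scales above $\beta_c/\sqrt 2$ and controlling the residual warm core by an entropy/volume count matching Goldschmidt's static-approximation prediction --- is the technical heart of the proof.
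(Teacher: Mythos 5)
Your lower bound is correct and matches the strategy in \cite{MW19}: restricting the trace to the diagonal (Peierls--Jensen) gives $\Phi_N\ge\Phi_N^{\mathrm{REM}}$, while the Gibbs variational principle with the product trial state $\rho\propto e^{-\beta\Gamma T}$ (whose diagonal is uniform) gives $\Phi_N\ge\Phi^{\mathrm{PAR}}(\beta\Gamma)-\beta\,|2^{-N}\sum_{\pmb\sigma}U(\pmb\sigma)|$, and the error self-averages to zero. These are exactly the two terms in Lemma~\ref{lem:lower}.

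The upper bound, however, has a genuine gap which you yourself flag: you reduce to a density-of-states estimate $\mathcal N(uN)\le\exp(N\max\{s_{\mathrm{REM}}(u),s_{\mathrm{PAR}}(u)\}+o(N))$, handle the deep wells $\{U<-\gamma N\}$ with $\gamma>\beta_c/\sqrt2$ by isolating them, and then leave the bulk count $\mathcal N_{H_{\mathrm{bulk}}}(uN)$ --- the ``technical heart'' --- unproven. This is not a cosmetic omission: the bulk still carries $U$ down to $-\gamma N$, and nothing in the sketch prevents the bulk eigenvalue count from interpolating between the two entropy curves rather than being dominated by their maximum. Iterating the decomposition ``across all scales above $\beta_c/\sqrt2$'' is a reasonable instinct, but as written it is a plan, not an argument.

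More significantly, you dismiss Golden--Thompson as yielding only the ``forbidden'' additive bound $\Phi^{\mathrm{REM}}+\Phi^{\mathrm{PAR}}$ --- but the actual resolution in \cite{MW19} is precisely Golden--Thompson, applied after the right truncation. The point you miss is that the threshold should be taken \emph{small}, not above $\beta_c/\sqrt2$: set $\mathcal L_\epsilon=\{U<-\epsilon N\}$ for $\epsilon$ arbitrarily small and split $H=U_{\mathcal L_\epsilon}\oplus H_{\mathcal L_\epsilon^c}-\Gamma A_{\mathcal L_\epsilon}$ as in \eqref{eq:decomp}. On $\mathcal L_\epsilon^c$ you have $U\ge-\epsilon N$, and Golden--Thompson there gives $\tr_{\mathcal L_\epsilon^c}e^{-\beta H_{\mathcal L_\epsilon^c}}\le e^{\beta\epsilon N}(2\cosh\beta\Gamma)^N$, which is $\Phi^{\mathrm{PAR}}+\beta\epsilon$, not the REM--PAR sum. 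The block on $\mathcal L_\epsilon$ is purely diagonal and contributes at most $\Phi_N^{\mathrm{REM}}(\beta)$. The only remaining cost is $\beta\Gamma\|A_{\mathcal L_\epsilon}\|/N$, and for the REM the clusters of $\mathcal L_\epsilon$ have uniformly bounded diameter almost surely (the $p=\infty$ case of Proposition~\ref{prop:deviation}), so $\|A_{\mathcal L_\epsilon}\|=O(\sqrt N)=o(N)$ by the Friedman--Tillich bound. This kills the bulk problem entirely and avoids the Feshbach--Schur/Courant--Fischer machinery. I would encourage you to notice that the crucial degree of freedom is the smallness of $\epsilon$, and that ``linear isolation'' of wells (your $\gamma>\beta_c/\sqrt2$, distance $\ge\alpha N$) is much more than is needed --- bounded cluster diameter suffices, and that is available for every $\epsilon>0$.
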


	In broad terms, the main features  of the phase diagram in Figure~\ref{fig:phase} such as a low-temperature frozen phase which gives way to a paramagnetic phase at both high temperatures or strong magnetic field are expected to stay for general $ p $;  cf.~\cite{Gold90,TD90,ONS07,SIC12}. The new features for general $ p $ are the richer structure of the low-temperature phase due to higher-order replica symmetry breaking and the conjectured endpoint of the  first-order  transition line in a critical point at a finite temperature which scales with $ \sqrt{p} $. No closed expression for the free energy is known in the quantum case. Crawford~\cite{Craw07} showed that the almost-sure limit 
	\begin{equation}
	\Phi^p(\beta,\Gamma) := \lim_{N\to \infty} \Phi_N^p(\beta,\Gamma) , 
	\end{equation}
	exists for any $ p \in [1,\infty] $.  All claims concerning the structure of the phase diagram for quantum p-spin models are based  on non-rigorous calculations using the replica trick and a $ 1/p $ expansion \cite{Gold90,TD90,ONS07}. 
	In fact, it is widely believed that $ \Phi^p(\beta,\Gamma) $  
	is continuous in $ 1/p $ and hence tends to the explicit expression for the QREM,
	$$ \lim_{p\to \infty} \Phi^p(\beta,\Gamma)  = \Phi^\textrm{QREM}(\beta,\Gamma) . $$
	We do not quite proof this conjecture in this paper. However, as a main new result we have the following continuity of the free energy.
	\begin{theorem}\label{thm:plimit}
		Let $p(N)$ be a nonnegative sequence which  satisfies a superlogarithmic growth condition, i.e. 
		\begin{equation}\label{eq:growth}
		\lim_{N \to \infty} \frac{p(N)}{\ln(N)} = \infty.
		\end{equation}
		For any $\beta,  \Gamma \geq 0$, we then have the almost sure
		convergence
		\begin{equation}\label{eq:plimit}
		\lim_{N \to \infty} \Phi_N^{p(N)}(\beta,\Gamma) = \Phi^\text{QREM}(\beta,\Gamma).
		\end{equation}
	\end{theorem}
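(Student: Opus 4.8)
The plan is to sandwich $\Phi_N^{p(N)}(\beta,\Gamma)$ between two quantities that both converge to $\Phi^{\text{QREM}}(\beta,\Gamma)$ and invoke Theorem~\ref{thm:Gold}. The natural comparison object is the QREM Hamiltonian $H_\infty = U_\infty + \Gamma T$, where $U_\infty$ has i.i.d.\ standard Gaussian entries $N^{-1/2} g_{\pmb\sigma}$ on the diagonal. Since both $U_p$ and $U_\infty$ are centered Gaussian fields on $\mathcal Q_N$, I would couple them on a common probability space and control the pressure difference via a Gaussian interpolation. Concretely, set $U_t = \sqrt{1-t}\, U_\infty + \sqrt t\, U_p$ for $t \in [0,1]$, let $H_t = U_t + \Gamma T$, and consider $\varphi(t) = \frac1N \ee \ln \tr e^{-\beta H_t}$ (the $2^{-N}$ normalization is harmless). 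Differentiating and using Gaussian integration by parts, $\varphi'(t)$ is expressed through the Gibbs expectation of $\big(\partial_t \ee[U_t(\pmb\sigma)U_t(\pmb\sigma')]\big)$, i.e.\ through the covariance difference $N(\xi(\pmb\sigma,\pmb\sigma')^p - \mathbbm 1[\pmb\sigma=\pmb\sigma'])$, weighted against $\ee\langle\pmb\sigma|e^{-\beta H}|\pmb\sigma\rangle\langle\pmb\sigma'|e^{-\beta H}|\pmb\sigma'\rangle / Z^2$-type two-replica objects. The point is that $\xi(\pmb\sigma,\pmb\sigma')^p$ is \emph{tiny} unless $\pmb\sigma$ and $\pmb\sigma'$ are very close in Hamming distance: $|\xi|^p \le (1-2/N)^p$ once the overlap differs from $\pm1$, and this is $o(1)$ precisely under the superlogarithmic growth \eqref{eq:growth}, since $(1-2/N)^{p(N)} = \exp(-2p(N)/N + o(p(N)/N)) \to 0$ when $p(N)/\ln N \to \infty$ — wait, more carefully: near the diagonal $\xi = 1 - 2k/N$ for Hamming distance $k$, so $\xi^{p} \approx e^{-2kp/N}$, which is small for $k \ge 1$ exactly when $p/N$... no: for fixed $k=1$ this is $e^{-2p/N}$, small only if $p \gg N$. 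I therefore need the sharper observation that the number of pairs at distance $k$ is $\binom Nk$, and $\binom Nk \xi^{p(k)} \le \binom Nk e^{-2kp/N}$; the combinatorial factor is $\le N^k$, so the relevant bound is $N^k e^{-2kp/N} = \exp(k(\ln N - 2p/N))$. Hmm — this is summable/small when $p/N \gg \ln N$, which is a superlinear, not superlogarithmic, condition. So the naive interpolation at this level of crudeness is not enough, and I should instead control the covariance difference without the $N$ prefactor issue by a more efficient argument: split $U_p = U_p' + U_p''$ where $U_p'$ carries the diagonal-like near-neighbour correlations.

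A cleaner route, and the one I would actually pursue, is the following two-sided bound avoiding heavy interpolation. \emph{Upper bound:} By a Gaussian comparison argument (Slepian/Kahane-type, or a direct interpolation bounding only the \emph{sign} of the covariance derivative) one can show $\ee\,\Phi_N^p(\beta,\Gamma) \le \ee\,\Phi_N^{\tilde\infty}(\beta,\Gamma) + \text{error}$, where $\tilde U_\infty$ is an i.i.d.\ field with slightly inflated variance $N(1+\varepsilon_N)$, because $\xi^p \le 1$ pointwise with equality only on the diagonal; the inflation absorbs the off-diagonal positive correlations. Rescaling $\beta$, $\ee\,\Phi_N^{\tilde\infty}(\beta,\Gamma) = \ee\,\Phi_N^\infty(\beta\sqrt{1+\varepsilon_N},\Gamma)$, which by Theorem~\ref{thm:Gold} and continuity of $\beta\mapsto\Phi^{\text{QREM}}(\beta,\Gamma)$ tends to $\Phi^{\text{QREM}}(\beta,\Gamma)$ once $\varepsilon_N\to0$. \emph{Lower bound:} Conversely $U_p = U_\infty^{(\text{eff})} + R_p$ where $U_\infty^{(\text{eff})}$ is i.i.d.\ with variance $N(1-\delta_N)$ and $R_p$ is a Gaussian remainder whose covariance is the off-diagonal part $N\xi^p$ restricted to $\pmb\sigma\ne\pmb\sigma'$ (plus a small diagonal correction); then by Jensen in the form $\ee\ln\tr e^{-\beta(A+R)} \ge \ee\ln\tr e^{-\beta A} - \beta\,\|\ee R\|$-type bounds — more precisely using that $R_p$ has mean zero and a second interpolation bounding $|\varphi'|$ by $\frac{\beta^2}{2N}\max_{\pmb\sigma}\sum_{\pmb\sigma'}|N\xi(\pmb\sigma,\pmb\sigma')^p|$ — one gets $\ee\,\Phi_N^p \ge \ee\,\Phi_N^\infty(\beta\sqrt{1-\delta_N},\Gamma) - \frac{\beta^2}{2}\sum_{k\ge1}\binom Nk N^{-1}|1-2k/N|^{p}$. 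The tail sum is the crux.

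\textbf{The main obstacle} is exactly controlling $S_N := \sum_{k=1}^{N} \binom{N}{k} \big|1 - \tfrac{2k}{N}\big|^{p(N)}$ and showing $S_N \to 0$ under \eqref{eq:growth} alone. I would handle it by splitting at $k = N/\ln N$ (say): for small $k$, use $\binom Nk \le (eN/k)^k$ and $|1-2k/N|^p \le e^{-2kp/N}$, giving terms $\le \exp\!\big(k\ln(eN/k) - 2kp/N\big)$ — still problematic for $k$ of order $1$. This forces the realization that one should \emph{not} expect $S_N\to0$; rather, the correct statement is that the contribution of near-diagonal pairs to the free energy is $o(1)$ for a subtler reason: those pairs form, for each fixed $\pmb\sigma$, a set of only polynomially many $\pmb\sigma'$ (at most $N^k$ for $k\le K$), and the off-diagonal Gaussian field restricted to such a sparse set has operator norm $O(\sqrt{\text{poly}(N)}\cdot\max|\xi|^{p/2}) = O(N^{C}e^{-p/N})$, which \emph{is} $o(N)$ — hence $o(1)$ after dividing by $N$ in the pressure — precisely when $p/N \gg \ln N$. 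Since that is superlinear, I suspect the theorem's actual proof (and what I would ultimately write) uses a cruder but more robust energy-based comparison: show the partition function is dominated, up to $e^{o(N)}$, by configurations $\pmb\sigma$ with $U_p(\pmb\sigma)$ in an $O(\sqrt N)$-window, on which $U_p$ and $U_\infty$ have the same one-point marginal, combined with a continuity/entropy argument à la the REM proof; the superlogarithmic condition enters to guarantee that the $N^{p}$-fold union bound over the array $g_{\pmb j}$ (there are $N^p$ couplings) does not destroy the Gaussian tail estimates, i.e.\ $\ln(N^p) = p\ln N \ll $ the relevant large-deviation rate $\sim N$. Reconciling these two scalings — and identifying which one the growth condition \eqref{eq:growth} is genuinely tailored to — is where I would expect to spend most of the effort.
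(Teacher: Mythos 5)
Your interpolation-based approach has a genuine gap, and you identify it correctly yourself midway through: a direct Gaussian comparison between $U_p$ and the i.i.d.\ field runs into the near-diagonal overlaps $\xi(\pmb\sigma,\pmb\sigma')^{p}\approx e^{-2kp/N}$, which for $k=1$ only become negligible once $p \gg N\ln N$ — superlinear, not superlogarithmic. Your final guess at where \eqref{eq:growth} enters (a union bound over the $N^p$ couplings $g_{\pmb j}$) is not the right mechanism either, and does not amount to a proof.

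The paper takes a structurally different route that sidesteps the two-replica/interpolation objects entirely. The lower bound is Lemma~\ref{lem:lower} (Gibbs variational principle) together with continuity of Parisi's formula in $1/p$; that part is robust and close to what you would get. For the upper bound, one does \emph{not} compare $U_p$ to $U_\infty$. Instead one isolates the large-deviation set $\mathcal L_\epsilon^{p}=\{\pmb\sigma : U_p(\pmb\sigma)<-\epsilon N\}$ and decomposes $H_{p}$ as in \eqref{eq:decomp}: the diagonal piece on $\mathcal L_\epsilon^{p}$, the full Hamiltonian on the complement, and an off-diagonal remainder $A_{\mathcal L_\epsilon^{p}}$ containing all hops into $\mathcal L_\epsilon^{p}$. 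Golden--Thompson then gives \eqref{eq:cor25}, so the whole game is to show $\|A_{\mathcal L_\epsilon^{p}}\|=o(N)$. This norm is controlled by proving (Proposition~\ref{prop:deviation}) that with probability $1-e^{-c_\epsilon N}$ every edge-connected component of $\mathcal L_\epsilon^{p}$ sits in a Hamming ball of radius $O(N/p(N))$; the Friedman--Tillich bound $\|T_{B_r}\|\le 2\sqrt{r(N-r+1)}$ then yields $\|A_{\mathcal L_\epsilon^{p}}\|/N\le 2\sqrt{K}\sqrt{1/N + 1/p(N)}\to 0$.

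The probabilistic input you are missing is the geometry of $\mathcal L_\epsilon^{p}$, not a covariance comparison. Along an \emph{edge-connected ray} $\pmb\sigma_1,\ldots,\pmb\sigma_L$ the total correlation $\sum_i\ee[U_p(\pmb\sigma_i)U_p(\pmb\sigma_j)]$ is bounded by a geometric series and is $O(N^2/p)$ \emph{uniformly in $L$} (equations \eqref{eq:cov1}--\eqref{eq:cov2}); feeding this into the elementary Lemma~\ref{lem:gauss} gives $\pp(\text{ray}\subset\mathcal L_\epsilon^{p})\le e^{-Lp\epsilon^2/(2h(1))}$. A union bound over the at most $2^N N^{2L}$ rays of length $L\sim K N/p(N)$ produces \eqref{eq:conclusion}, $\exp\bigl(N(2+K\ln N/p(N)-K\epsilon^2/(4h(1)))+O(\ln N)\bigr)$, and the superlogarithmic condition \eqref{eq:growth} is exactly what makes the combinatorial entropy term $K\ln N/p(N)$ vanish so that a large enough $K$ wins. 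In short: \eqref{eq:growth} is tailored to beating $N^{2L}$ with $L\sim N/p$, i.e.\ $\exp(O(N\ln N/p))$, against the ray-wise Gaussian tail $\exp(-\Theta(N))$ — a balance your calculation of $S_N$ could not see because it never exploits that the deviation set is geometrically fragmented into rays/clusters whose internal correlation budget does not grow with their length.
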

	The proof of this statement heavily relies on the method of proof of Theorem~\ref{thm:Gold} in \cite{MW19}. 
	It will be presented in Section~\ref{sec:proof} below.\\
	
	Let us conclude with some remarks:
	\begin{enumerate}
		\item In the classical case $ \Gamma = 0 $, the quenched pressure $ \mathbb{E}\left[\Phi_N^p(\beta,0) \right] $ is monotonically
		increasing in $ p $ and, in particular, we have
		$  \mathbb{E}\left[\Phi_N^p(\beta,0) \right] \leq  \mathbb{E}\left[\Phi_N^\textrm{REM}(\beta) \right] $ for any $ N \in \mathbb{N} $. 
		This follows with the help of Gaussian comparison \cite[Lemma~10.2.1]{Bov06} 
		from the following facts:
		i)~$ 2 \nn		\ni p \mapsto \mathbb{E}\left[U_p(\pmb{\sigma}) U_p(\pmb{\sigma}') \right]  $ is monotonically decreasing, and
		ii)~$\frac{\partial^2 \Phi_N^p(\beta,0)}{\partial U_p(\pmb{\sigma}) \partial U_p(\pmb{\sigma}') } < 0 $ in case $  \pmb{\sigma} \neq \pmb{\sigma}' $.
			Unfortunately, a similar monotonicity is not known in the quantum case. 
		\item Another intensively studied family of mean-field spin-glasses are the so-called spherical $p$-spin models, given by
		\begin{equation}\label{eq:Utilde}
		\widehat U_p(\pmb{\pmb{\sigma}}) =  \sqrt{\frac{p!}{N^{p-1}} } \sum_{1\leq j_1< \dots < j_p \leq N} g_{j_1,\dots,j_p}{\sigma}_{j_1} \cdots {\sigma}_{j_p}.
		\end{equation}
		In the classical case the spherical $p$-spin  models give rise to the same pressure in the thermodynamic limit as the $p$-spin SK-models \eqref{eq:U}; however, the models have different scales of fluctuations \cite{BKL02}. \\  Theorem \ref{thm:plimit}  remains true (with minor changes in the proof)
		if one works instead with the spherical $p$-spin models.
		This follows from the observation that 
		\begin{equation}\label{eq:covtilde}
		\mathbb{E}\left[\widehat U_p(\pmb{\sigma}) \widehat U_p(\pmb{\sigma}') \right] = N \,  (\xi(\pmb{\sigma},\pmb{\sigma}')^p - \delta_N^p(\pmb{\sigma},\pmb{\sigma^\prime})),
		\end{equation}
		where $\delta_N^p(\pmb{\sigma},\pmb{\sigma^\prime})$ is uniformly  bounded, 
		\begin{equation}\label{eq:delta}
		|\delta_N^p(\pmb{\sigma},\pmb{\sigma^\prime})| \leq \frac{1}{N^p} \sum_{\substack{1\leq j_1\leq \dots \leq j_p \leq N \\ \exists 1\leq k < l \leq p: \, j_k = j_l}} 1\leq \frac{N^p- p! \binom{N}{p}}{N^p} \leq \min \left\{1,\frac{p(p-1)}{2N}\right\}.
		\end{equation}
			 
	\end{enumerate}

	\section{Proof of Theorem~\ref{thm:plimit}}\label{sec:proof} 
	
	The proof is an adaptation of the strategy for the proof Theorem~\ref{thm:Gold}  in \cite{MW19}. 
	The lower bound in \cite{MW19} was based on the Gibbs variational principle and established there already for general p-spin interactions.
	For convenience of the reader, we recall the corresponding lemma here.
	\begin{lemma}[=Lemma 2.1 in \cite{MW19}]\label{lem:lower}
		For any $ p \in [1, \infty] $, $ N \in \mathbb{N} $ and  $ \Gamma , \beta \geq 0 $:
		\begin{equation}
		\Phi_N^p(\beta, \Gamma) \geq  \max\left\{ \Phi_N^p(\beta,0)  , p^{\mathrm{PAR}}(\beta \Gamma) - \frac{\beta }{N 2^N} \sum_{\pmb{\sigma} \in \mathcal{Q}_N} U_p(\pmb{\sigma}) \right\} .
		\end{equation}
	\end{lemma}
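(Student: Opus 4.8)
The plan is to obtain both members of the maximum from the Gibbs variational principle (equivalently, Peierls--Bogoliubov). Writing $\Phi_N^p(\beta,\Gamma)=-\ln 2+\tfrac1N\ln\tr e^{-\beta H_p}$ and using that
\[
\ln\tr e^{-\beta H_p}=\max_{\rho}\Bigl(-\beta\,\tr(\rho H_p)-\tr(\rho\ln\rho)\Bigr),
\]
with the maximum running over density matrices $\rho$ on $\ell^2(\mathcal{Q}_N)$ and attained at $e^{-\beta H_p}/\tr e^{-\beta H_p}$, every trial state yields a lower bound on $\Phi_N^p(\beta,\Gamma)$, and it remains to insert two convenient ones. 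For the first term, restrict the variation to matrices $\rho$ that are diagonal in the $z$-basis: since $T$ flips one spin at a time its diagonal vanishes, so $\tr(\rho H_p)=\tr(\rho U_p)$ for such $\rho$ and the entropy reduces to the classical Shannon entropy; optimizing over diagonal $\rho$ therefore reproduces $\ln\tr e^{-\beta U_p}$ exactly, and since the unrestricted maximum is at least as large, $\Phi_N^p(\beta,\Gamma)\ge\Phi_N^p(\beta,0)$.

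For the second term I would use the ``paramagnetic'' trial state $\rho_\Gamma:=e^{-\beta\Gamma T}/\tr e^{-\beta\Gamma T}$, the Gibbs state of the field term alone, which factorizes over sites as $\rho_\Gamma=\bigotimes_{j=1}^N(2\cosh\beta\Gamma)^{-1}e^{\beta\Gamma\sigma_j^x}$. Two elementary observations then suffice. First, because $\sigma_j^x$ has vanishing diagonal in the $z$-basis, each single-site factor has diagonal entries $\tfrac12$, hence $\langle\pmb{\sigma}|\rho_\Gamma|\pmb{\sigma}\rangle=2^{-N}$ for every $\pmb{\sigma}\in\mathcal{Q}_N$, and since $U_p$ is diagonal this gives $\tr(\rho_\Gamma U_p)=2^{-N}\sum_{\pmb{\sigma}}U_p(\pmb{\sigma})$. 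Second, as $\rho_\Gamma$ is by construction the maximizer of the variational functional for $\Gamma T$, one has $-\beta\Gamma\,\tr(\rho_\Gamma T)-\tr(\rho_\Gamma\ln\rho_\Gamma)=\ln\tr e^{-\beta\Gamma T}=N\ln(2\cosh\beta\Gamma)$. Splitting $H_p=U_p+\Gamma T$ and inserting $\rho_\Gamma$ into the variational principle therefore gives
\[
\ln\tr e^{-\beta H_p}\ge-\frac{\beta}{2^N}\sum_{\pmb{\sigma}\in\mathcal{Q}_N}U_p(\pmb{\sigma})+N\ln 2+N\ln\cosh(\beta\Gamma),
\]
and dividing by $N$ and subtracting $\ln 2$ yields exactly $\ln\cosh(\beta\Gamma)-\frac{\beta}{N2^N}\sum_{\pmb{\sigma}}U_p(\pmb{\sigma})$, the paramagnetic bound in the statement (recall $\ln\cosh(\beta\Gamma)=\Phi^{\mathrm{PAR}}(\beta\Gamma)$). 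The case $p=\infty$ needs no modification, since $U_\infty$ is again a diagonal matrix.

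I do not expect a genuine obstacle here; the one point that needs care is the entropy bookkeeping, namely exploiting the sharp form of the variational principle so that the trial state's von Neumann entropy cancels exactly the energy of the part of $H_p$ that the trial state diagonalizes -- the classical $U_p$-Gibbs state in the first case and $\rho_\Gamma$ in the second -- leaving, in the second case, only the cross term $\tr(\rho_\Gamma U_p)$, which the uniformity of the diagonal of $\rho_\Gamma$ turns into the flat average $2^{-N}\sum_{\pmb{\sigma}}U_p(\pmb{\sigma})$.
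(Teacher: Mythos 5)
Your proof is correct, and it follows exactly the route the paper attributes to \cite{MW19}: both bounds are obtained from the Gibbs variational principle, the first by restricting to $z$-diagonal trial states (for which $T$ contributes nothing to the energy and the entropy reduces to classical), the second by inserting the paramagnetic product state $\rho_\Gamma$, whose flat diagonal turns $\tr(\rho_\Gamma U_p)$ into the uniform average $2^{-N}\sum_{\pmb{\sigma}}U_p(\pmb{\sigma})$ while its own energy and entropy combine exactly to $N\ln(2\cosh\beta\Gamma)$.
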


	The main new challenge is to cope with the 
	correlations in $ U_p $ for $ p < \infty $ in the upper bound of  \cite{MW19}. 
	Since these correlations vanish in the limit $ p \to \infty $, the large deviation sets
	\begin{equation}\label{eq:deviation}
	\mathcal{L}_\epsilon^p \coloneqq \{\pmb{\sigma}|U_p(\pmb{\sigma}) < -\epsilon N \},
	\end{equation}
	with $\epsilon > 0$  are expected to consist of isolated small clusters. We write $\mathcal{L}_\epsilon^p = \bigcup_\alpha C_\epsilon^{\alpha,p}$ as a disjoint union of as its maximal edge-connected components $ C_\epsilon^{\alpha,p} $, where we recall from~\cite{MW19}:
	\begin{definition}
		An edge-connected component  $   \mathcal{C}_\varepsilon \subset \mathcal{L}_\varepsilon $ is a subset
		for which each pair \linebreak $ \sigma, \sigma' \in  \mathcal{C}_\varepsilon $ is connected through a connected edge-path of adjacent edges. 
		An edge-connected component $  \mathcal{C}_\varepsilon $ is maximal if there is no other vertex $ \sigma \in  \mathcal{L}_\varepsilon \backslash  \mathcal{C}_\varepsilon $ such that $  \mathcal{C}_\varepsilon \cup \{ \sigma \} $ 
		forms an edge-connected component. 
	\end{definition}
	In the situation of Theorem \ref{thm:plimit} we cannot expect that the size of the edge-connected components $C_\epsilon^{\alpha,p(N)}$ remains bounded as $N \to \infty$. However, we show that it is highly likely that all edge-connected components $C_\epsilon^{\alpha,p(N)}$ are contained in balls whose radius grows only sublinearly in~$N$.
	
	\begin{proposition} \label{prop:deviation}
		There exist a subset $\Omega_{\epsilon,N}$ of realisations and a constant $K \in \nn$, which is independent of $N$, such that:
		\begin{enumerate}
			\item[1.] for some $ c_\varepsilon > 0 $, which is independent of $ N $, and all $ N $ large enough:
			$$
			\mathbb{P}\left(  \Omega_{\varepsilon, N} \right) \geq 1 - e^{- c_\varepsilon N } ,
			$$
			\item[2.] on $ \Omega_{\varepsilon, N} $
			any edge-connected component $C_\epsilon^{\alpha,p(N)}$ of $\mathcal{L}_\epsilon^{p(N)}$ is contained in a ball $B_{K \lceil \frac{N}{p(N)} \rceil}(\pmb{\sigma})$ for some $\pmb{\sigma} \in \mathcal{Q}_N$.
	\end{enumerate} \end{proposition}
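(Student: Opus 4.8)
The plan is to prove the slightly stronger statement that, with probability at least $1-e^{-c_\epsilon N}$, every edge-connected component of $\mathcal{L}_\epsilon^{p(N)}$ has Hamming diameter at most $R:=K\lceil N/p(N)\rceil$; since a set of diameter $\le R$ lies in the ball $B_R(\pmb\sigma)$ around any of its points, this yields the proposition, with $\Omega_{\epsilon,N}$ the event described in part~2. Write $p=p(N)$. Two consequences of \eqref{eq:growth} enter: $p\to\infty$, so that $R=o(N)$; and $m_1\ln\bigl(\sum_{i\le 2R}\binom{N}{i}\bigr)=o(N)$ for any fixed $m_1$, since $R\ln N/N\to 0$ is, up to the factor $K$, precisely the content of \eqref{eq:growth}. (For $\epsilon\ge\beta_c$ the set $\mathcal{L}_\epsilon^{p}$ is already empty with overwhelming probability, as $\ee |\mathcal{L}_\epsilon^{p}|\le 2^N e^{-\epsilon^2 N/2}$; so one may as well assume $\epsilon<\beta_c$, although the argument below needs no such restriction.)

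The argument proceeds via two reductions. \emph{First}, if some component $C$ has diameter $>R$, then $\mathcal{L}_\epsilon^{p}$ contains an edge-connected subset $S$ of Hamming diameter in $[R,2R]$: take $S=C$ if $\operatorname{diam}(C)\in(R,2R]$, and otherwise let $S$ be the edge-connected component of some fixed $\pmb\sigma_0\in C$ inside $C\cap B_R(\pmb\sigma_0)$; following an edge-path in $C$ towards a vertex of $C$ at maximal Hamming distance from $\pmb\sigma_0$ (which is $>R$, as $\operatorname{diam}(C)>2R$ here) and stopping it the moment it leaves $B_R(\pmb\sigma_0)$ shows that this $S$ reaches Hamming distance exactly $R$ from $\pmb\sigma_0$, while $S\subseteq B_R(\pmb\sigma_0)$ forces $\operatorname{diam}(S)\le 2R$. \emph{Second}, from such an $S$ we extract $O(1)$ well-separated configurations: pick any $\pmb\sigma_0\in S$ (its eccentricity $e_0$ in $S$ satisfies $e_0\ge\tfrac12\operatorname{diam}(S)\ge R/2$), choose for each $j\le e_0$ a vertex $\pmb\tau_j\in S$ at Hamming distance $j$ from $\pmb\sigma_0$ (possible by a discrete intermediate-value argument along a path in $S$), and retain $\pmb\tau_0=\pmb\sigma_0,\pmb\tau_{R_0},\pmb\tau_{2R_0},\dots,\pmb\tau_{(m-1)R_0}$ where $R_0:=K_0\lceil N/p\rceil$ and $m=\lfloor e_0/R_0\rfloor+1$. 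Then $m$ lies in a bounded range $m_0\le m\le m_1$ with $m_0,m_1$ constants depending only on $K,K_0$ (and $m_0\ge K/(2K_0)$), these vertices lie in $B_{2R}(\pmb\sigma_0)$, and by the reverse triangle inequality they are pairwise at Hamming distance $\ge R_0$. Since $R_0\le d(\pmb\tau_{iR_0},\pmb\tau_{jR_0})\le 4R=o(N)$, the overlaps lie in $(0,1)$ and $0<\ee\bigl[U_p(\pmb\tau_{iR_0})U_p(\pmb\tau_{jR_0})\bigr]\le N\bigl(1-2R_0/N\bigr)^{p}\le N e^{-2K_0}=:N\delta$.

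To bound the probability that all $m$ of these configurations lie in $\mathcal{L}_\epsilon^{p}$, one uses the inclusion $\{U_p(\pmb\tau_{iR_0})<-\epsilon N\ \forall i\}\subseteq\{\sum_i U_p(\pmb\tau_{iR_0})<-m\epsilon N\}$ together with the bound $mN\bigl(1+(m-1)\delta\bigr)$ on the variance of the Gaussian $\sum_i U_p(\pmb\tau_{iR_0})$; provided the constants are later chosen so that $(m-1)\delta\le 1$, this probability is at most $e^{-m\epsilon^2 N/4}\le e^{-m_0\epsilon^2 N/4}$. A union bound over $\pmb\sigma_0\in\mathcal{Q}_N$, over the at most $\bigl(\sum_{i\le 2R}\binom{N}{i}\bigr)^{m-1}$ choices of the remaining vertices, and over the finitely many values of $m$, combined with the first reduction, gives
\[
\pp\bigl(\text{some component has diameter }>R\bigr)\ \le\ C(K,K_0)\,2^{N}\Bigl(\textstyle\sum_{i\le 2R}\binom{N}{i}\Bigr)^{m_1}e^{-m_0\epsilon^2 N/4}.
\]
Since $\ln 2^N=N\ln 2$ while $m_1\ln\bigl(\sum_{i\le 2R}\binom{N}{i}\bigr)=o(N)$, the exponent is dominated by $-m_0\epsilon^2 N/4$ once $m_0$ is large. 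Concretely, one first fixes an integer $K_0$ large enough that $e^{2K_0}>16\ln 2/\epsilon^2$, and then an integer $K$ in the nonempty range $8K_0\ln 2/\epsilon^2<K\le\tfrac12 K_0 e^{2K_0}$, which simultaneously ensures $(m_1-1)\delta\le (2K/K_0)e^{-2K_0}\le 1$ and $m_0\epsilon^2/4\ge\epsilon^2 K/(8K_0)>\ln 2$; both depend only on $\epsilon$. This yields $\pp(\Omega_{\epsilon,N}^c)\le e^{-c_\epsilon N}$ for all large $N$ with $c_\epsilon>0$ independent of $N$.

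The main obstacle is exactly that for $p$ growing only slightly faster than $\ln N$, the covariance $N\xi(\pmb\sigma,\pmb\sigma')^{p}$ of configurations at Hamming distance $O(N/p)$ need not be small, so a component spanning such a distance cannot be handled as a union of almost independent large-deviation events. The two-scale reduction circumvents this: one first trims the component to a piece of diameter $\Theta(N/p)=o(N)$, and only within that piece extracts a \emph{bounded} number of $\Theta(N/p)$-separated configurations, whose covariances are then $\le N e^{-2K_0}$; the entropic cost $\ln\bigl(\sum_{i\le O(N/p)}\binom{N}{i}\bigr)=o(N)$ of locating those configurations is negligible against $N\ln 2$ precisely because of the superlogarithmic growth~\eqref{eq:growth}.
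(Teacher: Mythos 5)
Your argument is correct and reaches the same conclusion, but via a genuinely different route from the paper. The paper's proof introduces \emph{edge-connected rays}: sequences in $\mathcal{Q}_N$ whose consecutive steps have Hamming distance $1$ or $2$ and whose distance from the starting vertex telescopes. Any component escaping every ball $B_{K\lceil N/p\rceil}(\pmb\sigma)$ contains such a ray of length $L=\bigl\lceil\tfrac{K}{2}\lceil N/p\rceil\bigr\rceil$. For a fixed ray, one estimates $\sum_j\mathbb{E}[U_p(\pmb\sigma_i)U_p(\pmb\sigma_j)]\le 2N\sum_{k\ge 0}(1-2k/N)^p\le h(1)N^2/p$ by exploiting that a ray has at most two vertices at any given distance from a given one; Lemma~\ref{lem:gauss} then gives probability $\exp\bigl(-Lp\epsilon^2/(2h(1))\bigr)$, and a union bound over the at most $2^N N^{2L}$ rays closes the argument. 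You instead work at two scales: trim an oversized component to a sub-piece $S$ of diameter $\Theta(N/p)$, extract from $S$ a \emph{bounded} number $m$ of witnesses spaced $R_0=K_0\lceil N/p\rceil$ apart so that their pairwise covariances are uniformly $\le Ne^{-2K_0}$, and union over the $O\!\left(2^N|B_{2R}|^{m_1}\right)$ possible tuples. Both schemes pay entropy $N\ln 2+O(N\ln N/p)=N\ln 2+o(N)$ --- this is exactly where the growth condition \eqref{eq:growth} is used --- and recover $\Theta(\epsilon^2N)$ from the Gaussian tail via what is in substance Lemma~\ref{lem:gauss}. The paper's ray structure treats arbitrary lengths in one stroke and avoids the trimming/extraction step; your version avoids the ray definition and its geometric bookkeeping, at the cost of a two-stage reduction.

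One small point to tighten: edge-paths advance the Hamming distance from $\pmb\sigma_0$ by $1$ \emph{or} $2$ per step, so the ``discrete intermediate-value'' claims do not actually produce a vertex at every intermediate distance $j$; you only obtain one at distance $j$ or $j\pm 1$. This injects off-by-one corrections throughout (the trimmed $S$ has $\operatorname{diam}(S)\ge R-1$ rather than $\ge R$, eccentricity $e_0\ge(R-1)/2$, the chosen witnesses are spaced $\ge R_0-2$ rather than $\ge R_0$, etc.). None of this is fatal --- it only shifts the explicit constants $K_0,K$ by $O(1)$ --- but as written the claims of ``distance exactly $R$'' and ``a vertex at Hamming distance $j$'' are not literally true and should be weakened accordingly.
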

	
	Before turning to the proof of Proposition~\ref{prop:deviation}, we demonstrate how this result and the basic bounds in  \cite{MW19} imply the almost sure convergence \eqref{eq:plimit} in Theorem~\ref{thm:plimit}.
	
	\begin{proof}[Proof of Theorem \ref{thm:plimit}] 
		The lower bound in Lemma~\ref{lem:lower} yields
		\begin{equation}\label{eq:lbound}
		\begin{split}
		\liminf_{N \to \infty} \Phi_N^{p(N)}(\beta, \Gamma) &\geq \max\{	\liminf_{N \to \infty} \Phi_N^{p(N)}(\beta,0), \Phi^{\text{PAR}}(\beta \Gamma) \} \\ &= \max\{ \ \Phi^{\text{REM}}(\beta), \Phi^{\text{PAR}}(\beta \Gamma) \}
		= \Phi^{\text{QREM}}(\beta,\Gamma),
		\end{split}
		\end{equation}
		Here the last equality follows from the continuity of the classical $p$-spin pressure, which is encoded in Parisi's formula \cite[Thm. 11.3.7]{Bov06}, and its monotonicity, stated as a remark after Theorem 
		\ref{thm:plimit}.
		
		For the upper bound, we fix some $\epsilon >0$ and we use the decomposition of the Hamiltonian 
		\begin{equation}\label{eq:decomp}
		H_{p(N)} \eqqcolon U_{\mathcal{L}_{\epsilon}^{p(N)}} \oplus 
		H_{\mathcal{L}_{\epsilon}^{p(N),c}} - \Gamma A_{\mathcal{L}_{\epsilon}^{p(N)}}
		\end{equation}
		where $ U_{\mathcal{L}_{\epsilon}^{p(N)}}$ is the multiplication operator by the REM values on $\ell^2(\mathcal{L}_{\epsilon}^{p(N)})$ and $H_{\mathcal{L}_{\epsilon}^{p(N),c}}$ is the restriction 
		of the Hamiltonian to the complementary subspace $\ell^2(\mathcal{L}_{\epsilon}^{p(N),c})$. The remainder term 
		$A_{\mathcal{L}_{\epsilon}^{p(N)}}$ consists of the matrix elements of $-T$ reaching $\mathcal{L}_{\epsilon}^{p(N)}$, i.e.
		\begin{equation}\label{eq:matrixel}
		\langle \pmb{\sigma} |  A_{ \mathcal{L}_\varepsilon} | \pmb{\sigma}' \rangle = 
		\begin{cases} 1 & \mbox{if $\pmb{\sigma} \in  \mathcal{L}_\varepsilon $ or $\pmb{\sigma}' \in   \mathcal{L}_\varepsilon $ and $ d(\pmb{\sigma}, \pmb{\sigma}') = 1 $,} \\
		0 & \mbox{else.}
		\end{cases} 	
		\end{equation}
		
		As in the proof of \cite[ Corollary 2.5]{MW19} one obtains from the Golden-Thompson inequality the upper bound 
		\begin{equation}\label{eq:cor25}
		\Phi_N^{p(N)}(\beta, \Gamma) \leq \max\{  \Phi_N^{p(N)}(\beta,0), \Phi^{\text{PAR}}(\beta \Gamma) + \beta \epsilon \} + \frac1N \left(\beta \Gamma \|A_{\mathcal{L}_{\epsilon}^{p(N)}}\| + \ln 2 \right).
		\end{equation}
		
		The operator norm of the restriction of $T$ to a Hamming ball $B_r$ of radius $r$ is known \cite{FriedTill05} to be bounded by 
		$\|T_{B_r}\| \leq 2 \sqrt{r(N-r+1)}$. Since the matrix elements of $-T$ are non-negative,  the restrictions of $T$ satisfy a monotonicity property, i.e. if $A \subset B$, then $\|T_A\| \leq \|T_B\|$. 
		Consequently, on the event $\Omega_{\epsilon,N}$ from Proposition~\ref{prop:deviation} we have
		\begin{equation}\label{eq:normres}
		\limsup_{N \to \infty} \frac1N \|A_{\mathcal{L}_{\epsilon}^{p(N)}}\| = 
		\limsup_{N \to \infty} \max_{\alpha} \frac1N \|T_{\mathcal{C}_{\epsilon}^{\alpha,p(N)}}\| \leq 
		\limsup_{N \to \infty}  \frac{2\sqrt{K}}{\sqrt{N}} \sqrt{1+N/p(N)} = 0 . 
		\end{equation}
		A Borel-Cantelli argument implies the almost sure bound 
		\begin{equation}\label{eq:upbound}
		\limsup_{N \to \infty}\Phi_N^{p(N)}(\beta, \Gamma) \leq
		\Phi^{\text{QREM}}(\beta, \Gamma) + \beta \epsilon,
		\end{equation}
		for any $\epsilon > 0$ and  the assertion of Theorem~\ref{thm:plimit} follows.
	\end{proof}
	
	We prepare the proof of Proposition~\ref{prop:deviation} with a bound on the probability that all components of a centered Gaussian vector are smaller than a certain constant:
	\begin{lemma}\label{lem:gauss}
	 Let $\pmb{g} = (g_1,\ldots,g_L)$, $L\in \nn$ , a centered Gaussian random vector with
		\begin{equation}\label{eq:constant}
		C_L \coloneqq \max_{i = 1,\ldots,L} \sum_{j = 1}^{L} \mathbb{E}\left[g_i g_j\right] . 
		\end{equation}
		Then for any $\delta > 0$ 
		\begin{equation}\label{eq:probbound}
		\pp\left(\max_j g_j  < -\delta,\right)	\leq \exp\left(-\frac{L \delta^2}{2C_L}\right).
		\end{equation}
	\end{lemma}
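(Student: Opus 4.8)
The plan is to reduce the event $\{\max_j g_j < -\delta\}$, which constrains all $L$ components simultaneously, to a one-dimensional Gaussian tail event for the sum $S \coloneqq \sum_{j=1}^L g_j$. The elementary observation is that if $g_j < -\delta$ holds for every $j$, then in particular $S < -L\delta$; hence
\[
\pp\left(\max_j g_j < -\delta\right) \;\leq\; \pp\left(S < -L\delta\right).
\]
This is the whole structural idea: a naive union bound over $j$ would only give $\pp(\max_j g_j<-\delta)\le\min_j\pp(g_j<-\delta)\le\exp(-\delta^2/(2\max_j\mathbb{E}[g_j^2]))$, which lacks the crucial factor $L$ in the exponent, whereas passing to $S$ recovers it.

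Next I would control the law of $S$. Being a linear combination of jointly Gaussian variables, $S$ is itself centered Gaussian, and its variance is the full sum of the covariance matrix entries:
\[
\mathbb{E}\left[S^2\right] \;=\; \sum_{i=1}^L\sum_{j=1}^L \mathbb{E}\left[g_i g_j\right] \;=\; \sum_{i=1}^L\left(\sum_{j=1}^L \mathbb{E}\left[g_i g_j\right]\right) \;\leq\; L\, C_L,
\]
where the last step simply bounds each inner row-sum by its maximum $C_L$. (In particular $\mathbb{E}[S^2]\ge 0$ forces $C_L\ge 0$; if $C_L=0$ then $\mathbb{E}[S^2]=0$, so $S=0$ almost surely, and the claimed inequality holds trivially, so we may assume $C_L>0$.)

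Finally, I would invoke the standard one-sided Gaussian tail estimate $\pp(X>t)\le\exp\!\big(-t^2/(2\,\mathbb{E}[X^2])\big)$, valid for a centered Gaussian $X$ with $\mathbb{E}[X^2]>0$ and $t>0$, applied to $X=-S$ and $t=L\delta$. Combined with $\mathbb{E}[S^2]\le LC_L$ this yields
\[
\pp\left(S < -L\delta\right) \;\leq\; \exp\left(-\frac{(L\delta)^2}{2\,\mathbb{E}[S^2]}\right) \;\leq\; \exp\left(-\frac{L^2\delta^2}{2\,L\,C_L}\right) \;=\; \exp\left(-\frac{L\delta^2}{2C_L}\right),
\]
which, chained with the first display, is exactly \eqref{eq:probbound}. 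There is no genuine obstacle in this argument; the only point meriting a moment's reflection is the first inequality, i.e.\ realizing that replacing the maximum by the average (equivalently, the sum) is precisely what trades the correlation structure of $\pmb g$ for the single scalar $C_L$ appearing in the statement.
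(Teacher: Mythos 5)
Your proof is correct and takes essentially the same route as the paper: pass from the event $\{\max_j g_j < -\delta\}$ to $\{S_L < -L\delta\}$ for $S_L = \sum_j g_j$, bound $\mathbb{E}[S_L^2] \leq L C_L$, and apply the standard one-sided Gaussian tail estimate. The paper compresses this into a single display; you have merely spelled out the same steps and added remarks on degenerate cases and on why a union bound would fail.
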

	\begin{proof}
		The random variable $S_L \coloneqq \sum_{i=1}^L g_i$ is Gaussian, centered and with variance bounded by $\mathbb{E}(S_L^2) \leq LC_L$.
		A standard estimate for Gaussian variables implies 
		\begin{equation}\label{eq:gauss}
		\pp\left(\max_j g_j  < -\delta \right) \leq \pp(S_L < -L \delta)
		\leq \exp\left(- \frac{L^2 \delta^2}{2\mathbb{E}(S_L^2)}
		\right) \leq
		\exp\left(-\frac{L \delta^2}{2C_L}\right). 
		\end{equation}
		\vspace{0.1cm}
	\end{proof}
	
	We are now ready to spell out  the proof of Proposition~\ref{prop:deviation}, which is based on a combinatorial argument. 
\begin{proof}[Proof of Proposition \ref{prop:deviation}]
It turns out to be helpful for the purpose of this proof to introduce the notion of an edge-connected ray. We say that $\pmb{\sigma_1},\ldots,\pmb{\sigma_L}\in \mathcal{Q}_N$ form an \textit{edge-connected ray of length} $L$ if the following properties are satisfied:
\begin{itemize}
	\item $d(\pmb{\sigma_i},\pmb{\sigma_{i+1}}) = 1$ or $d(\pmb{\sigma_i},\pmb{\sigma_{i+1}}) = 2$ for any $i = 1,\ldots,L-1$, \\
	\item $\displaystyle d(\pmb{\sigma_1},\pmb{\sigma_j}) = \sum_{i=1}^{j-1} d(\pmb{\sigma_i},\pmb{\sigma_{i+1}})$ for any $j = 2,\ldots,L,$ 
\end{itemize}
where $d(\pmb{\sigma}, \pmb{\sigma^\prime}) \coloneqq \frac12 \sum_{i=1}^N
|\sigma_i - \sigma_i^\prime|$ denotes the Hamming distance.
Here, the first property ensures that $\pmb{\sigma_1},\ldots,\pmb{\sigma_L}$ form an edge-connected subset of $ \mathcal{Q}_N$ and the second property forces the vertices to form a straight ray starting at $\pmb{\sigma_1}$. \\We now proceed in three steps. In the first step we give a bound for the probability that a certain edge-connected ray is a subset of $\mathcal{L}_\epsilon^{p}$. Then, we consider the probability that $\mathcal{L}_\epsilon^{p}$ contains an edge-connected ray of length $L$. Finally, we use the result from Step 2 to conclude the assertions of Proposition \ref{prop:deviation}.\\

\textbf{Step 1:} Let $\pmb{\sigma_1},\ldots,\pmb{\sigma_L}$ be an edge-connected ray of length $L$. We are interested in the probability that $\{\pmb{\sigma_1},\ldots,\pmb{\sigma_L}\} \subset \mathcal{L}_\epsilon^{p}$. In view of Lemma \ref{lem:gauss}, we calculate 
\begin{align}\label{eq:cov1} 
\sum_{i=1}^L \mathbb{E}[U_p(\pmb{\sigma_i}) U_p(\pmb{\sigma_j})] &= N 
\sum_{i=1}^L \left(1-\frac{2d(\pmb{\sigma_i},\pmb{\sigma_j})}{N}\right)^{p} 
\leq 2N \sum_{k=0}^{L} \left(1-\frac{2k}{N}\right)^{p} \notag \\
&\leq 2N \sum_{k=0}^{L} e^{-2kp/N} \leq \frac{2N}{1-e^{-2p/N}}.
\end{align}
The first equality directly follows from \eqref{eq:spinp} and the next inequality is based on the observation that for any vertex $\pmb{\sigma_i}$ of an edge-connected ray and any number $0 \leq k \leq L$  there are at most two other vertices at distance $k$. Then, we have made use of the convexity of the exponential function and the geometric series formula. 

We note that the function $h(x) \coloneqq \frac{x}{1-e^{-x}}$ is strictly positive and increasing on the interval $(0,1]$. Therefore, we obtain the bound 
\begin{equation}\label{eq:cov2}
\sum_{i=1}^L \mathbb{E}[U_p(\pmb{\sigma_i})U_p(\pmb{\sigma_j})] 
\leq h(1) \frac{N^2}{p}^,
\end{equation}
and Lemma~\ref{lem:gauss} implies
\begin{equation}\label{eq:probbnd}
\pp(\{\pmb{\sigma_1},\ldots,\pmb{\sigma_L}\} \subset \mathcal{L}_\epsilon^{p})
= \pp(\max_{i=1,\ldots,L} U_p(\pmb{\sigma_i}) < -\epsilon N) 
\leq \exp\left(-\frac{L p \epsilon^2 }{2 h(1)}  \right).
\end{equation}

\bigskip
\textbf{Step 2:} We denote by $D(L,N)$ the number of edge-connected rays of length $L$ in $\mathcal{Q}_N$. We claim that 
\begin{equation}\label{eq:combi}
D(L,N)\leq 2^N N^{2L}
\end{equation}
This can be seen as follows: we have $2^N$ choices for the first vertex $\pmb{\sigma}_1$ and at most $N^2$ choices for any subsequent vertex. The bounds \eqref{eq:probbnd} and \eqref{eq:combi} together with the union bound then yield
\begin{equation}\label{eq:probb}
\pp(\{\exists \,\pmb{\sigma_1},\ldots,\pmb{\sigma_L} \in  \mathcal{L}_\epsilon^{p} \text{ forming an edge-connected ray}  \})
\leq  2^N  N^{2L} \exp\left(-\frac{L p \epsilon^2 }{2 h(1)}  \right)
\end{equation}

\bigskip
\textbf{Step 3:} 
We take some fixed $K \in \nn$ and define 
$\Omega_{\epsilon,N}^{K}$ as the subset of realizations where the second assertion holds true. It remains to show the bound  
$\pp((\Omega_{\epsilon,N}^{K})^c) \leq e^{-c_\epsilon N}$ for a convenient choice of $K$.
For any $\omega \notin \Omega_{\epsilon,N}^{K}$, we find an edge-connected component $C_\epsilon^{p(N)}$ of $\mathcal{L}_\epsilon^{p(N)}$ such that  $C_\epsilon^{p(N)} \not \subset B_{K \lceil \frac{N}{p(N)} \rceil}(\pmb{\sigma})$ for any $\pmb{\sigma} \in \mathcal{Q}_N$. In particular, for such an $\omega$ this implies the existence of an edge-connected ray $\pmb{\sigma_1},\ldots,\pmb{\sigma_L} \in \mathcal{L}_\epsilon^{p(N)} $ of length $L \coloneqq \lceil \frac{K}{2} \lceil \frac{N}{p(N)} \rceil \rceil$. Using 
\eqref{eq:probb}, we arrive at
\begin{equation}\label{eq:conclusion}
\begin{split}
\pp((\Omega_{\epsilon,N}^K)^c) &\leq \pp(\{\exists \, \pmb{\sigma_1},\ldots,\pmb{\sigma_L}\in  \mathcal{L}_\epsilon^{p} \text{ forming an edge-connected ray}  \}) \\
& \leq  \exp\left(N\left(2+\frac{K \ln N}{p(N)}-\frac{K \epsilon^2}{4 h(1)}\right) + (K+1) \ln N \right),
\end{split}
\end{equation}
since $2L \leq K (N/p(N)+1) +1$. The first assertion of Proposition \ref{prop:deviation} follows for a suitable choice of $K$, since $p(N)$ satisfies the growth condition \eqref{eq:growth}.
\end{proof}

\section*{Acknowledgements }
This work was partially supported by the DFG under under EXC-2111 -- 390814868.

\bigskip
\bigskip
\begin{minipage}{0.5\linewidth}
\noindent Chokri Manai and Simone Warzel\\
MCQST \& Zentrum Mathematik \\
Technische Universit\"{a}t M\"{u}nchen\\
Corresponding author: \verb+warzel@ma.tum.de+
\end{minipage}

\end{document}